\newtheorem{theorem}{Theorem}
\newtheorem{example}{Example}
\newtheorem{definition}{Definition}
\newcommand{\stitle}[1]{\vspace{1ex} 

\noindent{\bf #1}}
\long\def\comment#1{}
\begin{document}

\title{Top-k Representative Search for Comparative\\ Tree Summarization}

\author{Yuqi Chen, Xin Huang, Bilian Chen

\thanks{Yuqi Chen and Xin Huang are with Hong Kong Baptist University, Hong Kong, China. E-mail:\{csyqchen, xinhuang\}@comp.hkbu.edu.hk}~
\\
~\thanks{Bilian Chen is with the Department of Automation,
Xiamen University, China. E-mail: blchen@xmu.edu.cn}}



\maketitle

\begin{abstract}
Data summarization aims at utilizing a small-scale summary to represent massive datasets as a whole, which is useful for visualization and information sipped generation. 
However, most existing studies of hierarchical summarization only work on \emph{one single tree} by selecting $k$ representative nodes, which neglects an important problem of comparative summarization on two trees. 
In this paper, 
given two trees with the same topology structure and different node weights, we aim at finding $k$ representative nodes, where $k_1$ nodes summarize the common relationship between them and $k_2$ nodes highlight significantly different sub-trees meanwhile satisfying $k_1+k_2=k$. 
To optimize summarization results, we introduce a scaling coefficient for balancing the summary view between two sub-trees in terms of similarity and difference. 
Additionally, we propose a novel definition based on the Hellinger distance to quantify the node distribution difference between the sub-trees. We present an greedy algorithm SVDT to find high-quality results with approximation guaranteed in an efficient way. 
Furthermore, we explore an extension of our comparative summarization to handle two trees with different structures. Extensive experiments demonstrate the effectiveness and efficiency of our SVDT algorithm against existing summarization competitors.
\end{abstract}


\section{Introduction}
Graphs composed of nodes and edges are often used to depict complex datasets~\cite{jin2001structure},\cite{koutra2011algorithms}. Graph visualization gives a simplified presentation with an intuitive overview for users. However, graphs can only enhance understanding if the size of summary answers is small enough in a manageable scale for human comprehension \cite{jing2011graphical}. 

In this paper, we study a novel problem of differential tree summarization to depict the similarities and differences between two trees, which aims to show two trees' homogeneity and heterogeneity using a small summary answer. 
Let us consider a real example of an academic conference SIGMOD with several research sessions and papers. Fig.~1(a) shows two instance trees $T_1$ and $T_2$ representing the SIGMOD'19 conference, where the node weight represents the number of downloads of a research paper,  at different periods of 2019-2020 and 2019-2024, respectively. $T_1$ and $T_2$ have the same structure, which has four research sessions of 
\emph{MH, IE, DI, ML}, and 12 research papers ($p_1$, $p_2$ $\ldots$, $p_{12}$), but having different node weights. Specifically, $p_1$\footnote{Prasaad et al., Concurrent prefix recovery ..., In SIGMOD, 2019.}, $p_2$\footnote{Zhang et al., Briskstream: Scaling data stream ..., In SIGMOD, 2019.}, $p_3$\footnote{Kim et al., Border-collie: a wait-free, read-optimal ..., In SIGMOD, 2019.} are three research papers belonging to the research session of \emph{MH} (modern hardware).
Note that the weights of four research session nodes are the average download of this session's papers, to fairly represent the hotness of each research session. 

\begin{figure}[t]
\centering
\vspace{-0.4cm}
\includegraphics[width=3.8in]{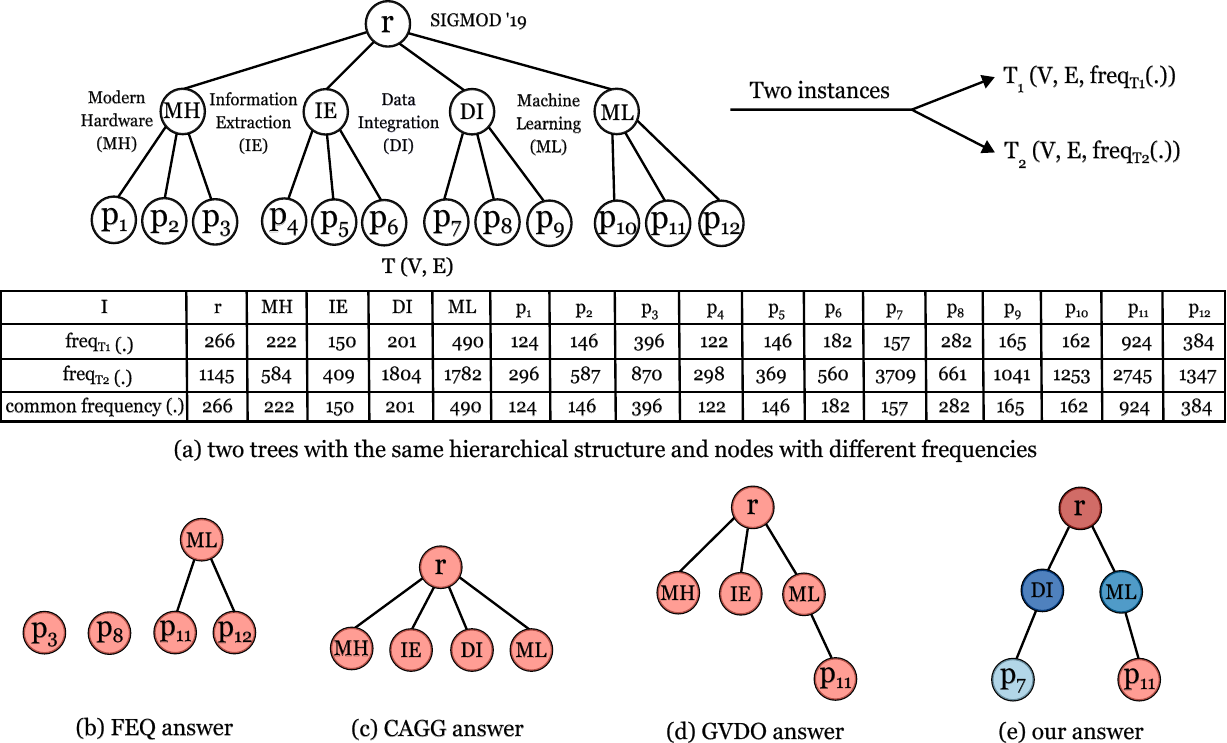}
\caption{A running example of our problem.} \label{fig1ex}
\vspace{-0.6cm}
\centering
\end{figure}

The objective of our differential summarization is to select the most $k$ \emph{attractive} or \emph{unpopular} research topics and papers to distinguish the conference SIGMOD'19 in the recent four years from 2020 to 2024. We set a small number $k=5$. The existing three different methods of FEQ~\cite{jing2011graphical}, CAGG~\cite{jing2011graphical}, and  GVDO~
\cite{zhu2021efficient} find the answers of five most attractive  papers, as shown in Figs.~1(b),~1(c),~1(d), respectively.  It lacks of identifying research sessions or specific papers that have gone from niche to popular, which fails to capture the changes in a summarized way. 
In contrast, our method can capture distinctive features between two trees with the same structure, as shown in Fig.~1(e). The answer includes $r$, \emph{DI}, \emph{ML}, $p_7$ and $p_{11}$. 
The nodes contributing to homogeneous summarization  are in red, while nodes contributing to differential summarization are in blue. The darker the node color, the more the node's contribution to the attribute it represents.



\begin{itemize}
    \item \textbf{Homogeneous summarization}: This involves two nodes $r$, $p_{11}$\footnote{Shang et al., Democratizing data science ..., In SIGMOD, 2019.} . Node $p_{11}$ is labeled in light red, indicating that this machine learning-related paper has been consistently popular from 2020 to 2024.
    \item \textbf{Differential summarization}: This involves three nodes \emph{DI}, \emph{ML}, $p_7$\footnote{Salimi et al., Interventional fairness: Causal ..., In SIGMOD, 2019.}. Both of the nodes \emph{DI} and \emph{ML} in dark blue with their direct ancestor node $r$ in dark red represent significant changes in data integration and machine learning, representing that these two sessions become increasingly popular from 2020 to 2024. The node $p_7$ is marked in light blue, meaning that from low at the beginning to very high later on, we find it is the best paper of SIGMOD'19. 
\end{itemize} 

In light of the above, the differential summary in Fig.~1(e) is a concise answer to well preserve the homogeneity and difference in $T_1$ and $T_2$. However, the problem of differential summarization has two challenges. First, it is hard to determine the number of homogeneous and differential nodes although the total number $k$ is given. Second, it is not easy to design a unified summary score function to similarity and difference simultaneously by incorporating the weight distributions in subtrees. To tackle them, we propose a fast  algorithm for differential summarization based on a new concept of distribution score $SimDif_D$ using Hellinger distance~\cite{hellinger1909neue}, which uses a scaling coefficient to balance. 
To summarize, this paper makes the following contributions:

\begin{itemize}
\item The problem we study is to choose a small set of elements to summarize two large-scale and hierarchically structured trees. We define the kVDT problem for finding representative nodes for similarity and difference summarization by considering 
the diverse representation across subtrees' weight distribution. 
(Section~3).
\item 
We propose an efficient Hellinger distance-based distribution calculation to quantify the  node distribution distance between two
sub-trees. Furthermore, we introduce a summary visualization technique that converts the output of our method into a concise and visually comprehensible tree summary (Section~5).
\item We conduct extensive experiments on real-world datasets to validate the feasibility and effectiveness of our proposed method against several competitors (Section~6). 
\end{itemize}

\vspace{-10pt}
\section{Related Work}

Graph summarization has been studied widely in the literature~\cite{navlakha2008graph,chu2024graph,kumar2018utility,wu2014graph,huang2017ontology}.
 Navlakha et al.\cite{navlakha2008graph} proposed a graph compression method to generate graph summary with edge corrections. To enhance efficiency, Chu et al. \cite{chu2024graph} improved Navlakha's approach by identifying and merging pairs with high saving values. 
Wu et al. \cite{wu2014graph} considered topology and feature similarity for summarizing attributed graphs, but led to a graph recovery error rate.
Kumar et al. \cite{kumar2018utility} and Lee et al. \cite{lee2022slugger} addressed this problem from different perspectives, the former focused on utility-driven to ensure the graph summary meet a user-specified utility threshold, while the latter achieved lossless summarization by representing unweighted graphs using positive and negative edges between hierarchical supernodes.
Huang et al. \cite{huang2017ontology} proposed GVDO algorithm to visualize a summary graph 
in the ontology structure. TS algorithm \cite{kim2020summarizing} was introduced for efficiently summarizing hierarchical multidimensional data. 
Different from most existing studies above that focus on a single graph, our study aims at differential summarization by finding $k$ similar and different representatives.  

\vspace{-10pt}

\section{Problem Statement}

In this section, we present our problem and notions.
\subsection{Preliminaries}
Given the tree $T = (\mathcal{V}, E, freq)$, which contains $\mathcal{V}$ nodes and $E$ edges, and $freq$ is a non-negative integer function of the node weight. We consider two instances of trees $T_1$ and $T_2$ with different weights.
Note that, in this paper, 
$anc(x)$ represents the ancestors of $x$ which include $x$ itself, while $des(x)$ represents $x$ itself with the descendants of $x$. 

To capture the representative effect of nodes and their descendants at different levels, we consider the distance denoted as $dis_x(y)$ between them. 
$dis_x(y)$ only exists when $y \in des(x)$
. Otherwise, $dis_x(y)$ defaults to 0.
The distance $dis_x(y)$ can be calculated using the following formula.

\vspace{-10pt}

\begin{equation*}
dis_x(y) =
\left\{
\begin{aligned}
\frac{1}{level(y)-level(x)+1}, if ~ y \in des(x)\\
0, otherwise
\end{aligned}
\right.
\end{equation*}

\begin{definition} [Differential Weight] Given two weighted trees $T_1, T_2$, the differential weight of vertex $x$ represented by $\omega(x)$ is the larger frequency in two trees, i.e.,  $\omega(x) = max(freq_{T_1}(x),freq_{T_2}(x))$. 
\end{definition}

    
To filter nodes with a larger frequency share in the original tree structure, we introduce the concept of differential weight in Definition 1. 

\begin{definition} [Scaling Coefficient] Given two weighted trees $T_1, T_2$, the scaling coefficient represented by $\gamma$ is obtained by calculating the average ratio of similarity and difference of all nodes satisfying $x \in \mathcal{V}$.
\begin{equation}
 \gamma = \frac{1}{|\mathcal{V}|} \sum_{x\in \mathcal{V}} \frac{\min(freq_{T_1}(x),freq_{T_2}(x))}{|freq_{T_1}(x)-freq_{T_2}(x)|}
 \label{eq.gamma}
\end{equation}
\end{definition} 

Note that Eq.~\ref{eq.gamma} ignores the case of node $x$ with  $freq_{T_1}(x)=freq_{T_2}(x)$. Similar computations are also applied to other metrics throughout this paper. In practical applications, there are significant differences in homogeneous and heterogeneous summarization due to varying frequencies and hierarchical structures of datasets. To balance the representation of similarity and difference in summary views, we introduce the scaling coefficient $\gamma$.


\vspace{-5pt}

\subsection{Summary Score Function}

\subsubsection{Distribution} 
In the hierarchical structure, we are concerned with the frequency allocation of themselves and their descendants. We measure node distribution by converting it into discrete probability distribution through normalization. 

\subsubsection{Self Feature} For a given node $x$, we define the measure formulas for calculating the similarity and difference in $T_1$ and $T_2$, expressed by $sim(x)$ and $dif(x)$, respectively. Note that the following definition only uses the scaling coefficient for the difference measure formula, since $\gamma$ is used to adjust the imbalance of the similarity and difference between nodes.

\vspace{-10pt}

\begin{equation*}
    sim(x)=\sum_{\substack{y\in des(x),\\ y\notin des(z),\\ z\in S_1}}{\frac{\min(freq_{T_1}(y),freq_{T_2}(y))}{\max(freq_{T_1}(y),freq_{T_2}(y))}\cdot dis_x(y)}
\end{equation*}

\vspace{-10pt}

\begin{equation*}
    dif(x)=\gamma \cdot \sum_{\substack{y\in des(x),\\ y\notin des(z),\\ z\in S_2}}\frac{|freq_{T_1}(y)-freq_{T_2}(y)|}{\max(freq_{T_1}(y),freq_{T_2}(y))}\cdot dis_x(y)
\end{equation*}

\begin{figure*}[t]
\vspace{-10 pt}
\centering
\includegraphics[width=\textwidth]{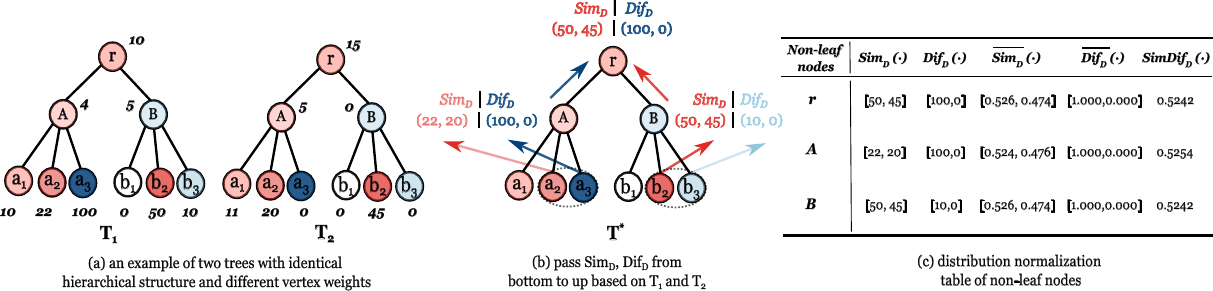}
\caption{An example of passing up the similarity and difference distribution. Here $\beta=1$. Finally, the root $r$ has 
$Sim_D(r)=[50, 45]$ and $Dif_D(r)=[100, 0]$. }
\label{fig1}
\vspace{-18 pt}
\centering
\end{figure*}

\subsubsection{Selective Gain Score} While a node is selected into similar set $S_1$ or differential set $S_2$, its selective gain score for $S_1$ is represented by $sim_{S_1}(x)$, and its selective gain score for $S_2$ is denoted as $dif_{S_2}(x)$, which are jointly determined by the node's weight, distribution and self feature. Here we use Hellinger distance to measure the distribution distance denoted as ${SimDif_D(x)}$ of the node $x$ between two trees $T_1, T_2$.
\begin{equation*}
    sim_{S_1}(x) = \omega(x) \cdot (1-SimDif_D(x)) \cdot sim(x)
\end{equation*}
\begin{equation*}
    dif_{S_2}(x) = \omega(x) \cdot SimDif_D(x) \cdot dif(x)
\end{equation*}
\subsubsection{Summary Score} Based on the definition of gain score, the summary score of the set $S$ is defined as:
\begin{equation*}
sum(S) = \sum \max_{x \in S_1 \cap anc(y) } sim_{S_1}(y) + \sum \max_{x \in S_2 \cap anc(y) } dif_{S_2}(y)
\end{equation*}

To summarize, this paper addresses the kVDT problem, which selects \textbf{k} representative nodes for \textbf{V}isualizing similarity and \textbf{D}ifference on hierarchical \textbf{T}ree structure as follows.

\subsubsection{kVDT Problem} Given two trees $T_1, T_2$ with different weights, and a positive integer $k$, find a representative set satisfying $S \subseteq \mathcal{V} $ and $S$ = $S_1 \cup S_2$, where $S_1$ includes  $k_1$ representatives of similarity with $S_2$ includes $k_2$ representatives of difference, ensuring $S$ reaches the maximum summary $sum(S)$ with $|S|=k$.

\section{Problem Analysis}
In this section, we analyze the property of our summary score function.

\stitle{Submodularity.} Given a set $S_i$, when $x\in S_i$, $Max_{S_i}(y)$ represents the maximum gain score for $y$ in the set $S_i$. In addition, we define $Rep_x(y)$ represents the representative effect of the as: $Rep_x(y) = \{y\in des(x)| freq(y) \cdot dis_x(y)\}$, and $C_{S_i}(x)$ indicates $x$ is the nearest ancestor of all nodes in set $C_{S_i}(x)$ in $S$ as: $C_{S_i}(x) = \{y\in des(x)|Max_{S_i}(y)=Rep_x(y)\}$.

\begin{theorem}
    $sum$ is submodular. i.e., for all $S_a$, $S_b$ $\subseteq \mathcal{V}$ subject to $S_a \subseteq S_b$, we have $sum(S_a \cup {x})-sum(S_a) \geq sum(S_b \cup {x})-sum(S_b).$
\end{theorem}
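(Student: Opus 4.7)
The plan is to decompose the summary score as $sum(S) = sum_1(S_1) + sum_2(S_2)$, where $S_1, S_2$ are disjoint by construction, and to verify submodularity of each component separately. Since the sum of two submodular functions is submodular, and since adding $x$ to $S$ only affects one of $S_1, S_2$ (and hence only one component), this reduction is lossless: the two cases for where $x$ lands are treated in parallel and reduce to the single-component claim.

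The first step is to rewrite $sum_1$ in a clean per-descendant form. Unrolling the definitions, each $y \in \mathcal{V}$ contributes to $sum_1(S_1)$ only through its nearest ancestor $z^\star(y) \in S_1 \cap anc(y)$, with total contribution
\begin{equation*}
\alpha(y)\cdot dis_{z^\star(y)}(y), \qquad \alpha(y) := \omega(y)\,(1-SimDif_D(y))\cdot \tfrac{\min(freq_{T_1}(y),freq_{T_2}(y))}{\max(freq_{T_1}(y),freq_{T_2}(y))}.
\end{equation*}
Hence
\begin{equation*}
sum_1(S_1) = \sum_{y \in \mathcal{V}} \alpha(y)\cdot \max_{z \in S_1 \cap anc(y)} dis_z(y),
\end{equation*}
with the convention that an empty max is $0$. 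Each summand $f_y(S_1) := \alpha(y)\cdot \max_{z \in S_1 \cap anc(y)} dis_z(y)$ is then a classical weighted max-coverage function.

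To verify diminishing returns for $f_y$, I would split on whether $x \in anc(y)$. If $x \notin anc(y)$, both marginal gains vanish. If $x \in anc(y)$, the marginal gain on top of any base set $S$ equals $\alpha(y)\cdot \max\{0,\ dis_x(y) - \max_{z \in S\cap anc(y)} dis_z(y)\}$, which is monotone non-increasing in $S$: enlarging $S$ can only enlarge (or leave unchanged) the max being subtracted. This directly yields $f_y(S_a \cup \{x\}) - f_y(S_a) \ge f_y(S_b \cup \{x\}) - f_y(S_b)$ for $S_a \subseteq S_b$. Summing over $y$ gives submodularity of $sum_1$; an identical argument with $\alpha(y)$ replaced by $\omega(y)\cdot SimDif_D(y)\cdot \gamma\cdot \tfrac{|freq_{T_1}(y)-freq_{T_2}(y)|}{\max(freq_{T_1}(y),freq_{T_2}(y))}$ handles $sum_2$.

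The main obstacle is the notational entanglement: $sim(x)$ and $dif(x)$ are defined with set-dependent summation ranges (the clause $y \notin des(z),\, z \in S_1$ inside $sim$), and the outer summary formula mixes the selection set with a max over ancestors. The only non-routine step is making the initial rewrite precise, i.e., verifying that the ``exclude descendants already covered by another selected ancestor'' clause in $sim$, together with the outer max over $S_1 \cap anc(y)$, really collapses to ``each $y$ is charged exactly once, to its closest ancestor in $S_1$.'' Once that identity is established, the per-$y$ diminishing-returns inequality is routine, and additivity over $y$ and over the two components completes the proof.
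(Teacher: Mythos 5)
Your proof is correct and follows essentially the same route as the paper's: both arguments reduce $sum$ to a sum of two facility-location-type functions of the form $\sum_{y}\max_{z\in S_i\cap anc(y)}(\cdot)$ and then invoke diminishing returns of the inner max (the paper phrases this via the containment $C_{T_i'}(x)\subseteq C_{S_i'}(x)$ of newly-covered node sets, which is just an aggregated version of your per-element inequality). The rewrite identity you flag as the ``only non-routine step'' is also assumed without proof in the paper, so your proposal is no less complete than the original.
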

\begin{proof}
    Given two sets $S_1 \subset T_1 \subset \mathcal{V}$ for representing similarity and $S_2 \subset T_2 \subset \mathcal{V}$ for representing difference. Any element $x \in \mathcal{V} \setminus \{T_1\cup T_2\}$ can be added to these sets. Therefore, $T_i'= T_i\cup \{x\}$ and $S_i' = S_i \cup \{x\}.$
    Firstly, $\forall y \in V $, $Max_{S_i}(y) \leq Max_{T_i}(y)$ and $Max_{S_i'}(y) \leq Max_{T_i'}(y)$ is obvious. Then, $\forall y \in C_{T_i'}(x)$, we can figure out $Rep_x(y) = Max_{T_i'}(y) \geq Max_{S_i'}(y)$ from above. Besides, since $x$ belongs to set $S_i'$, $Rep_x(y) \leq Max_{S_i'}(y)$ holds. So $Rep_x(y) \leq Max_{S_i'}(y)$. According to the definition of function $C$, we can obtain $y\in C_{S_i'}(x)$. Combining the range of $y$, $C_{T_i'}(x)$ is a subset of $C_{S_i'}(x)$. Lastly, we denote summary score function $sum$. 
    Therefore, $sum(T_{i}')-sum(T_i)=\sum_{y\in V}(Max_{T_i'}(y)-Max_{T_i}(y))=\sum_{y\in C_{T_i'}(x)}(Rep_x(y)-Max_{T_i}(y))$. Similarly, we can obtain $sum(S_i')-sum(S_i)=\sum_{y\in C_{S_i'}(x)}(Rep_x(y)-Max_{S_i}(y))$. In view of $C_{T_i'}(x) \leq C_{S_i'}(x)$, we can figure out $\sum_{y\in C_{S_i'}(x)}(Rep_x(y)-Max_{S_i}(y)) \geq \sum_{y\in C_{T_i'}(x)}(Rep_x(y)-Max_{T_i}(y))$. That is to say, $sum(S_i')-sum(S_i)\geq sum(T_i')-sum(T_i)$. 
    
    $sum$ is a submodular function. Since submodular functions are additive and monotonic, $\sum_i sum(S_i')-\sum_i sum(S_i) \geq \sum_i sum(T_i')-\sum_i sum(T_i)$. This implies that both $S_1$ and $S_2$ satisfy submodularity. Therefore, the set $S$ obtained by merging $S_1$ and $S_2$ also satisfies submodularity.
\end{proof}

\stitle{Approximation}. In Theorem 1, we establish the submodularity of the summary score function 
$sum$. Thus, 
A greedy algorithm can be developed to  achieve $(1-1/e)$-approximation for maximizing a monotone submodular set function with a cardinality constraint~\cite{nemhauser1978analysis}. 


\section{SVDT Algorithm}
In this section, we introduce
a new greedy algorithm SVDT.

\vspace{-10pt}

\subsection{Distribution normalization of non-leaf nodes.}
In our problem formulation, we consider the distribution of nodes with their descendants. 
The key of comparing a representative node $x$ in two trees $T_1$, $T_2$ is to 
measure the similarity and difference of their representative normalized distribution in $x$'s subtree. 
Here, we use a parameter $\beta\in \mathcal{Z}^{+}$ to be the number of representative similar and different weights, 
i.e., $Sim_D(x) = [Sim_1 T_1(x), Sim_1 T_2(x), \ldots, 
Sim_\beta T_2(x)]$, $Dif_D(x) = [Dif_1 T_1(x), Dif_1 T_2(x), \ldots, 
Dif_\beta T_2(x)]$, respectively. 
These two distributions can be efficiently computed from the non-leaf nodes at the bottom level, i.e., progressively passing up top-$\beta$ similar and differential distributions from the bottom upwards until reaching the root. 


\begin{example}
    We use Fig.~2 to illustrate the distribution normalization of non-leaf nodes in detail. $T_1$ and $T_2$ share the same hierarchical structure and different weight values in Fig.~2(a). 
    Note that, we consider the common weight of nodes as the similarity value and the difference in weights as the difference value. For passing upward, we select top-1 node each from similarity distribution and difference distribution.
    Specifically, starting from the lowest level of non-leaf nodes, that is, node $A$, and considering it with its direct descendants, i.e., $\{A, a_1, a_2, a_3\}$, we select a node that best represents similarity, which is $a_2$, and a node that best represents difference, which is $a_3$, and use their frequencies in $T_1$ and $T_2$ as the distribution for the non-leaf node $A$. Accordingly, the distribution for the non-leaf node $B$ is represented by the frequencies of the most similar node $b_2$ and the most dissimilar node $b_3$ in $T_1$ and $T_2$. Therefore, we obtain $Sim_D(A)$, $Dif_D(A)$, $Sim_D(B)$, $Dif_D(B)$.
    Then, we proceed to pass upwards. For the root node $r$, with the nodes $A$ and $B$ as its direct descendants, we search for the distribution that best represents homogeneity and difference within $\{r, A, B\}$, which are obviously $Sim_D(B)$ and $Dif_D(A)$. Therefore, the distribution representing node $r$ is $Sim_D(r)$=(50,45), $Dif_D(r)$=(100,0).
    After obtaining $Sim_D$ and $Dif_D$ for all non-leaf nodes, as in Fig.~2(b), we proceed with the normalization to obtain $\overline{Sim_D}$ and $\overline{Dif_D}$ for all non-leaf nodes in Fig.~2(c). 
    After converting them to probability distributions, we apply the Hellinger distance to 
    calculate the distance between similarity distribution and difference distribution, i.e., ${SimDif_D}$ as the distribution values of the non-leaf nodes.
\end{example}

\begin{algorithm}[t]
\caption{Computing 
$SimDif_D(x)$}
\label{algo.1}
\begin{algorithmic}[1]
\footnotesize
\Require  A non-leaf node $x \in \mathcal{V}$, two weighted trees $T_1, T_2$. 
\Ensure $SimDif_D(x) \in [0,1]$.


\State 	Let $SimDif_D(x) \leftarrow 0$;

\State $ Sim_D(x) \leftarrow [Sim_1 T_1(x), Sim_1 T_2(x), 
...,  Sim_\beta T_1(x), 
Sim_\beta T_2(x)]$;
\State $ Dif_D(x) \leftarrow [Dif_1 T_1(x), Dif_1 T_2(x), 
..., Dif_\beta T_1(x), 
Dif_\beta T_2(x)]$;

\State $sum_{Sim_D}(x) \leftarrow \sum_{1\leq k \leq \beta, 1\leq n \leq 2} Sim_k T_n(x)$;
\State $sum_{Dif_D}(x) \leftarrow \sum_{1\leq k \leq \beta, 1\leq n \leq 2} Dif_k T_n(x)$;

\State $\overline{Sim_D(x)} \leftarrow  Sim_D(x)/ sum_{Sim_D}(x)$;
\State $\overline{Dif_D(x)} \leftarrow  Dif_D(x)/ sum_{Dif_D}(x)$;

\For {$i = 1$ to $\beta$}
\For {$j = 1$ to $2$}
\State  $SimDif_D(x) \leftarrow {SimDif_D}(x) + (\sqrt{Dif_i T_j(x)}-\sqrt{Sim_i T_j(x)})^2 $;
\EndFor
\EndFor
\State $SimDif_D(x) \leftarrow \frac{1}{\sqrt{2}} \sqrt{{SimDif_D(x)}} $;
\State \Return ${SimDif_D(x)}$;
\end{algorithmic}
\end{algorithm}

\stitle{Selective marginal gain.}  
For any summary set $ S \in \mathcal{V} $ and $ \forall \ x(x \in \mathcal{V} \ \land \  x \notin S $), we have $ \Delta sum (x|S) = sum(S\cup {x}) - sum(S) \geq 0 $. In our algorithm, although inserting a new node $x$ can produce a non-negative gain on the original set $S$, but this marginal gain is selective, we can choose to insert $x$ into $S_1$ for representing similarity or $S_2$ for representing difference according to its self feature score. 
Let $S= S_1 \cup S_2$. When $x$ is chosen as the similar marginal gain, $ \Delta sum
(x|S) = sim_{S_1}(S_1 \cup x) - sim_{S_1}(S_1)$. Conversely, if $x$ is a representative of difference, $ \Delta sum
(x|S) = dif_{S_2}(S_2 \cup x) - dif_{S_2}(S_2)$.


\stitle{Greedy marginal gain selection for fractional knapsack.} 
We reformulate our kVDT problem as the fractional knapsack problem. Given $n$ candidate nodes $x$ and an empty set $S$, where node $x$ has its total value $V(x)$ with different unit values as similarity and difference, denoted as $V_{sim}(x)$ and $V_{dif}(x)$ respectively. Each node weight is fixed to be 1, and the capacity of set $S$ is $k$. Assume that $V(l, S)$ is the total value of the set $S$ with $l$ elements. We have  $V(l, S)=V(l-1, S')+$ $\max_{x \in \mathcal{V}\setminus S'}(V_{sim}(x),V_{dif}(x)), \text{ where } 1\leq l\leq k$. 
For initialization, we set $V(l, S) = 0$ with $|S|=0$. We use SVDT in Algorithm 2 to solve the problem.


   

\stitle{Optimized summary by $k_1$ and $k_2$ combination.} To obtain the solution of kVDT problem, we propose an optimized summary combination. In each combination, we have a set $S$ with $|S| = k$ consisting of $k_1$ nodes representing similarity and $k_2$ nodes representing difference, where $k_1$ decreases from $k$ and $k_2$ increases from $0$ satisfying $k_1 + k_2 = k$. If $k_1 > k_2$ in the current combination, similarity is prioritized, otherwise, difference is prioritized. We calculate the changes in similarity and difference as $\Delta loss$ and $\Delta gain$ respectively. If $\Delta loss + \Delta gain > 0$, the elements in set $S$ are updated to the current combination. After the iteration, we obtain the optimized summary combination sets $S_1$ and $S_2$.






\begin{algorithm}[!t]
\caption{SVDT$(S_1,S_2,k_1, k_2, k)$}
\begin{algorithmic}[1]
\footnotesize
\Require A summary set $S$, two weighted trees $ T_1, T_2$, a positive integer $k$, a priority queue $CandidateByScore$.
\Ensure A set of $k_1$ similarity summary elements $S_1$, and a set of $k_2$ difference summary elements $S_2$.
\State 	Let $S \leftarrow \emptyset, S_1' \leftarrow \emptyset, S_2' \leftarrow \emptyset$, $CandidateByScore\leftarrow \emptyset$;
\State Compute initial score ${SimDif_D}$ for all non-leaf nodes satisfying $x\in \mathcal{V} \cap {des}(x)>1$  by Algorithm~\ref{algo.1} and push into $candidateByScore$;
\While{$|S| < k$}
\State $x \leftarrow CandidateByScore.top()$;
\If{$V_{sim}(x) > V_{dif}(x)$} 
\If{$ RoundSim[x] < |S_1|$}
    \State Update $V_{sim}(x)$;
$RoundSim[x] \leftarrow |S_1|$;
continue;
\Else \State {$S_1 \leftarrow  S_1 \cup \{x\}$;  }
\EndIf
\Else 
 \If{$ RoundDif[x] < |S_2|$}
     \State 
     Update $V_{dif}(x)$;
 $ RoundDif[x] \leftarrow |S_2|$;
continue;
 \Else
 \State {$S_2 \leftarrow  S_2 \cup \{x\}$;  }
 \EndIf
\EndIf
\State $S \leftarrow  S \cup \{x\}$;  
\EndWhile
\State \Return $S_1,S_2$;








\end{algorithmic}
\end{algorithm}

\vspace{-10pt}

\subsection{Summary visual representation.}

We make the summary visual representation based on SVDT answer $S$ as follows. Consider a hierarchical tree $T$ in Fig.~3(a) and $k = 7$.
Four nodes in the set $S_1$ contributing to similarity are in red color, while three nodes in the set $S_2$ contributing to difference are in blue color.
Actually, $T$ is an instance subtree of real ACM CCS hierarchical tree in Fig.~9(a). 
To achieve a compact visualization, we take two steps. Firstly, for nodes that are neither colored themselves nor have colored descendants, we prune them and their descendants. The remaining nodes form the topology structure shown in Fig.~3(b).
Secondly, we skip non-representative nodes in chains of representative nodes with ancestor-descendant relationships. For instance, in the chains \{$I_1, I_6, I_{13}$\} and \{$I_1, I_6, I_{14}$\}, we only show the ancestor-descendant relationships from $I_1$ to $I_{13}$ and from $I_1$ to $I_{14}$. In such cases, we represent their connection with dashed lines in the summary visual representation $T^*$ shown in Fig.~3(c). 

\begin{figure}[t]
\centering
\vspace{-5pt}
\includegraphics[width=3.5in]{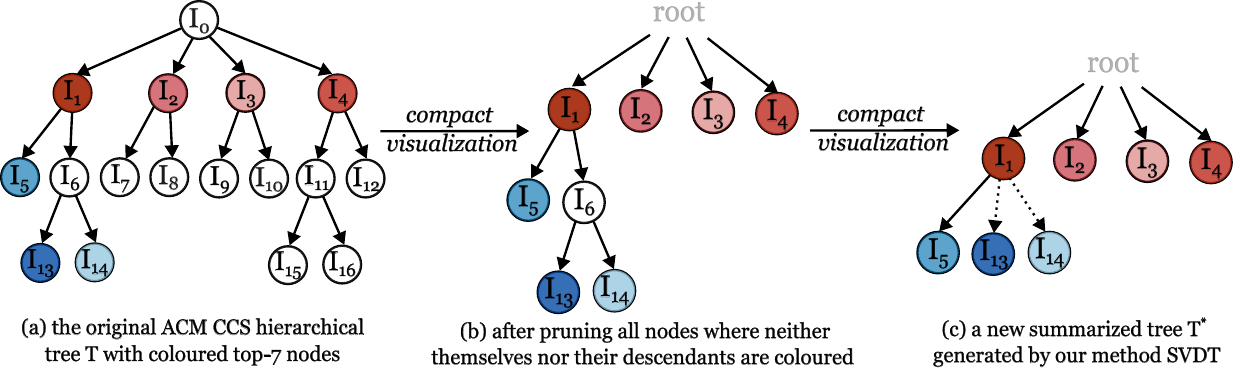}
\caption{Summary visualization  
based on SVDT answers.}
\label{fig1}
\vspace{-5pt}
\centering
\end{figure}

\vspace{-10pt}

\subsection{Discussion of handling two trees with different structures.} Since comparisons between trees with different structures are also widespread in practical situations, we discuss the solution of this situation in Fig.~4. Consider the relationship between two trees with different structures, 
it can be categorized into two cases, including (a) in which no subtree matching found and (b) in which one tree is a subtree of the other. In case (a), we compare the structural differences between the trees and introduce new edges and nodes at the corresponding positions. 
These newly introduced nodes are assigned a weight of 0 and transformed into the same structure for subsequent similarity and difference comparison. In case (b), where a subtree relation exists, if the size difference between $T_1$ and $T_2$ is relatively significant, extensive introduction operations in case (a) would be time-consuming. Instead, we efficiently compare the similarity and difference between such trees by subtree matching. Regarding this case, we consider the proportion of subtree matching, weights, and hierarchical differences to compute the commonality and diversity between the two trees. Specifically, in case (b), we consider the matching measure of $\{A, a_1, a_2\}$ in $T_2$ compared to $T_1$ in the above aspects.

\begin{figure}
[t]
\centering
\includegraphics[width=3.5in]{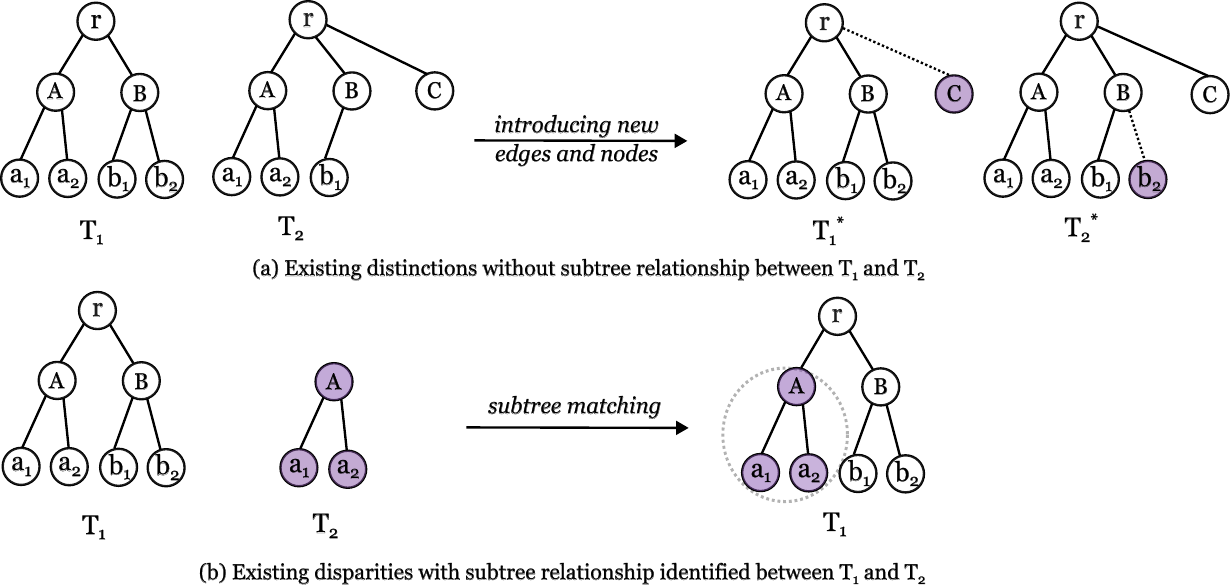}
\caption{Handle two trees' summarization in different structures.} \label{fig1}
\centering
\vspace{-15pt}
\end{figure}

\vspace{-5pt}

\section{Experiments}

In this section, we conducted extensive experiments to evaluate 
the performance. The source code of our algorithm is available at \url{https://github.com/csyqchen/TKDE_SVDT}.

\stitle{Datasets.} We used three pairs of real-world datasets containing hierarchical terminologies.  
Latt\&Lnur are extracted from the Medical Entity Dictionary \cite{jing2011graphical} with 4226 entities in a tree height of 22, which consists of information on patient and nurse access to online knowledge resources. 
The Anime hierarchical structure, extracted from the “Anime” directory in Wikipedia, includes 15,135 animes and their 
frequency represented by the number of page views in Jan 2018 and Jan 2024 \cite{pageviews}. 
The Yago hierarchical structure is an extracted tree of ontology structure yagoTaxonomy with 493,839 taxonomies  in multilingual Wikipedias, which are by randomly assigning two different weights denoted as Yago1 and Yago2.


\stitle{{Methods Comparison.}} To evaluate the quality of our method SVDT, we compare it with three baseline algorithms, including GVDO 
\cite{zhu2021efficient}, FEQ \cite{jing2011graphical}, and CAGG \cite{jing2011graphical}. GVDO is a summarization approach applied to ontology-based structure. FEQ is a method of selecting $k$ nodes with the highest frequency  as the summarized answer. CAGG selects $k$ nodes with the highest frequency as the summarization answer by controlling descendant contribution. 
Note that these three competitor methods are designed to work on \emph{a single tree}. For comparison, we implement variants of methods running on a \emph{a common tree}, which takes the minimum weight of a node in two trees $T_1$ and $T_2$ as the new weight in this common tree.  
In addition, we set the parameters $k=10$ and $\beta=50$ for SVDT by default.


\stitle{Evaluation Metrics.} To evaluate the quality of  summarization answers, we use three metrics: the diversity $Div(S)$, the query closeness $C_Q(S)$ \cite{huang2017ontology}, and average level difference $Ald(S)$ \cite{zhu2021efficient} as follows. 

First,  $Div(S)$  computes the diversity of the original data set represented by the summary node in two trees $T_1$, $T_2$. The higher the diversity, the better the summary.
\vspace{-3pt}
\begin{equation*}
    Div(S)=\sum_{\substack{
    x \in S \cap
    anc(y)}} |freq_{T_1}(y)-freq_{T_2}(y)|\cdot dis_x(y)
\end{equation*}
\vspace{-5pt}

Second, we design a query closeness $C_Q(S)$ to test the summary $S$'s distance to query $Q$.  
We randomly selected 500 vertices $Q$. For each vertex $q\in Q$, we accumulates the sum of the minimum distance from $q$ to $S$. 
A lower $C_Q( S)$ indicates a superior summary.

\vspace{-7pt}
\begin{equation*}
    C_Q(S) = \sum_{q \in Q} \min_{x \in S} \text{dist}_{T}(q, x)
\end{equation*}
\vspace{-7pt}


Third, the average level difference $Ald(S)$ is defined as the average level difference between summary node and weighted node, denoted by the following formula. 
A smaller average level difference means that the summary graph more effectively preserves the hierarchical structure of the original graph. 
\vspace{-3pt}
\begin{equation*}
    Ald(S) = \frac{\sum\limits_{y \in \mathcal{V}} \min\limits_{x \in S\cap anc(y)} 
    (level(y)-level(x))
    \cdot \alpha(x, y)
    }{\sum\limits_{y \in \mathcal{V}} \alpha(x, y)}
\end{equation*}, where $level(x)= dist_T(x, r_0)$ in tree $T$ rooted by $r_0$ and  $\alpha(x,y)=|freq_{T_1}(y)-freq_{T_2}(y)|$.   

\stitle{Exp-\MakeUppercase{\romannumeral 1}: Quality Evaluation}.  Fig.~5 demonstrates that SVDT markedly surpasses three baseline methods in capturing diversity $Div(S)$, thereby indicating its superiority in summarizing intrinsic tree relationships. Fig.~6 and Fig.~7 both illustrate that our algorithm, SVDT, maintains the closest distance and the smallest average level difference in almost all datasets compared to other baseline methods. This demonstrates that our summary is the most effective in closely mirroring the original tree structure of the dataset. 

\vspace{20pt}

\begin{figure}[h]
\vspace{-10pt}
\centering
\hspace{-1.5em}
\subfigure[Latt$\&$Lnur]{
\includegraphics[width=1.25in]{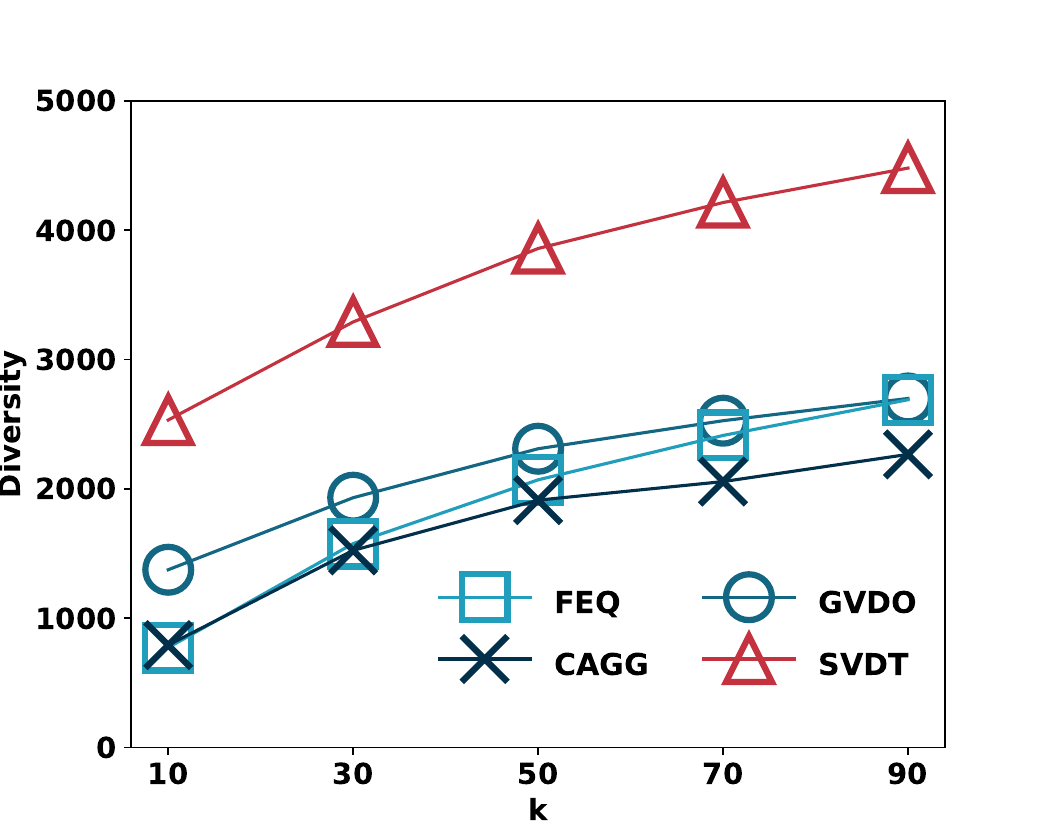}}
\hspace{-1.8em}
\subfigure[Anime2018$\&$2024]{
\includegraphics[width=1.25in]{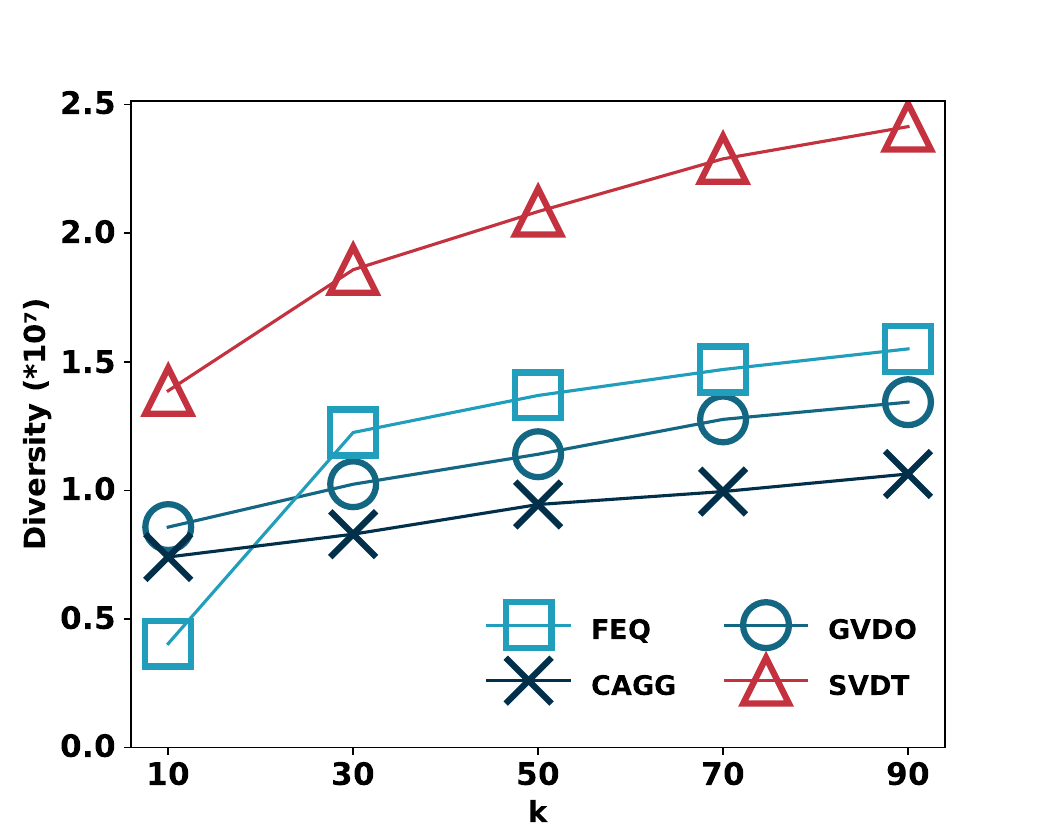}}
\hspace{-1.8em}
\subfigure[Yago1$\&$2]{
\includegraphics[width=1.25in]{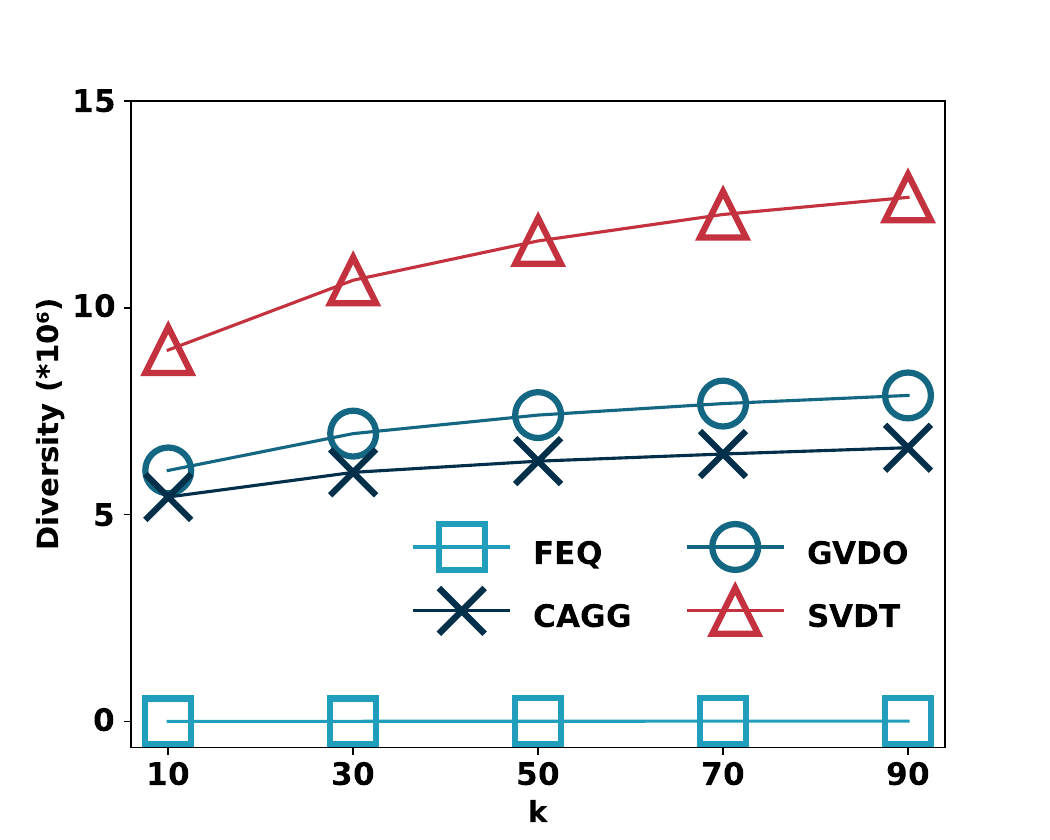}}
\hspace{-2.5em}
\vspace{-5pt}
\caption{Diversity $Div(S)$ evaluation on 
all datasets.}

\label{Fig.main}
\end{figure}


\begin{figure}[h]
\vspace{-15pt}
\centering
\hspace{-1.5em}
\subfigure[Latt$\&$Lnur]{
\includegraphics[width=1.25in]{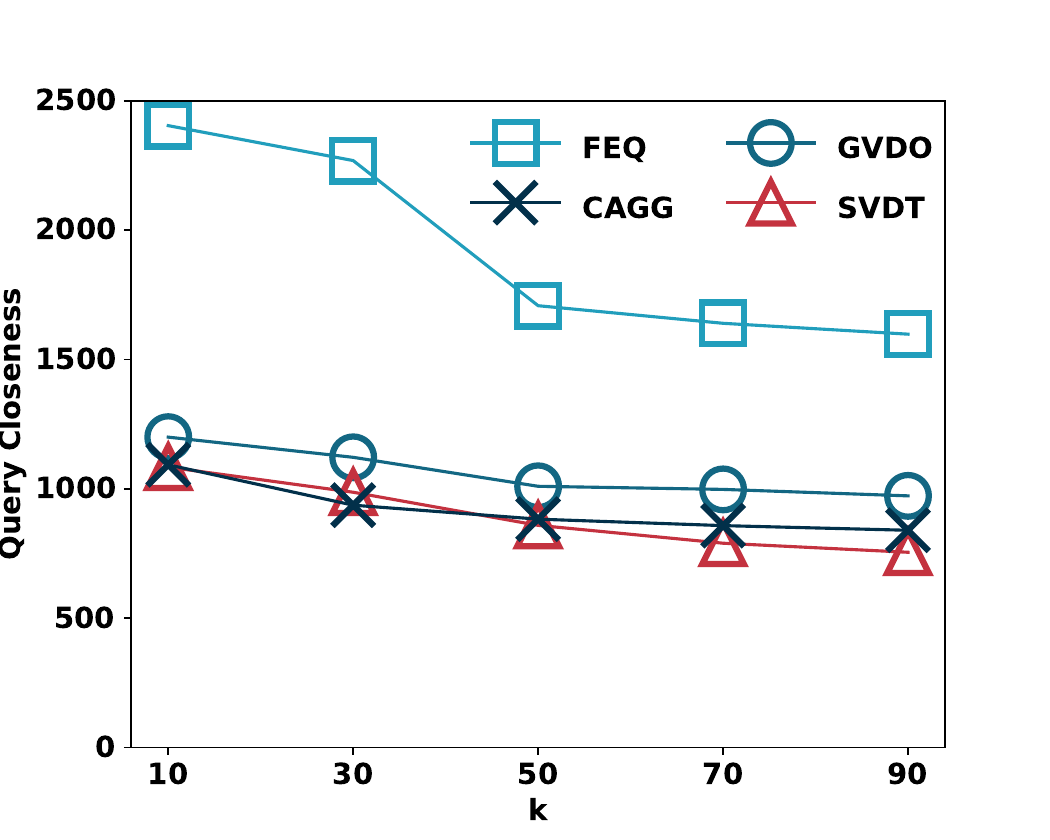}}
\hspace{-1.8em}
\subfigure[Anime2018$\&$2024]{
\includegraphics[width=1.25in]{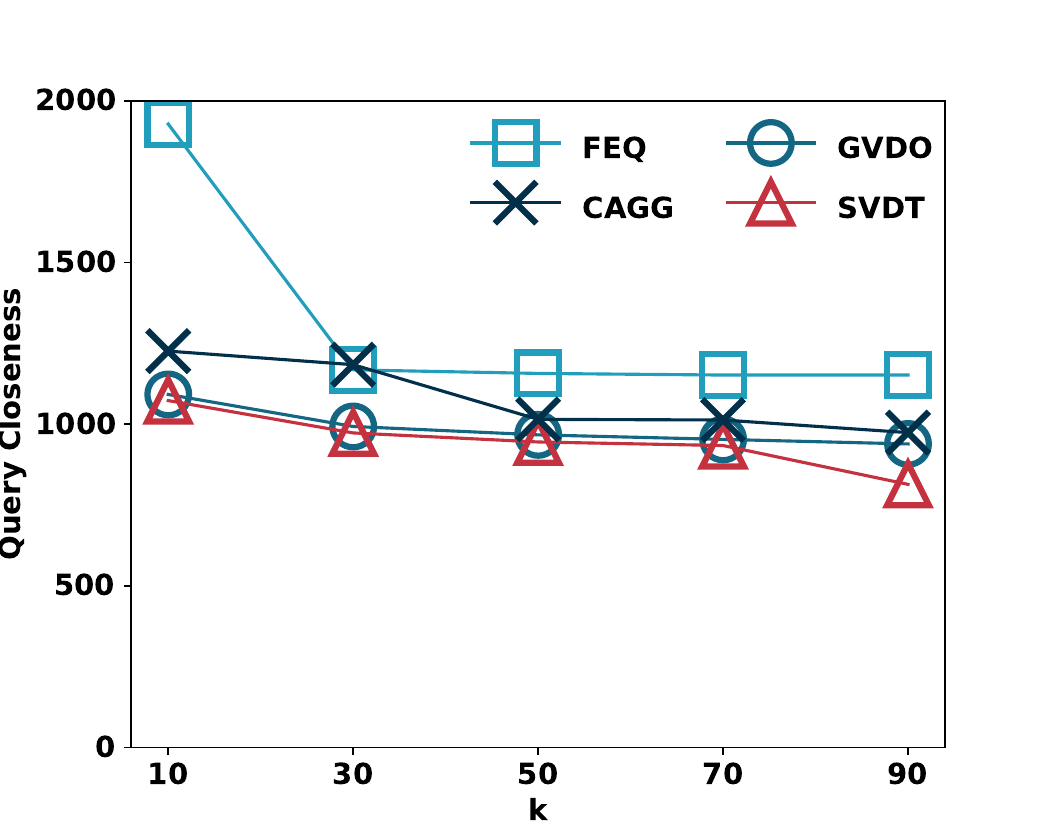}}
\hspace{-1.8em}
\subfigure[Yago1$\&$2]{
\includegraphics[width=1.25in]{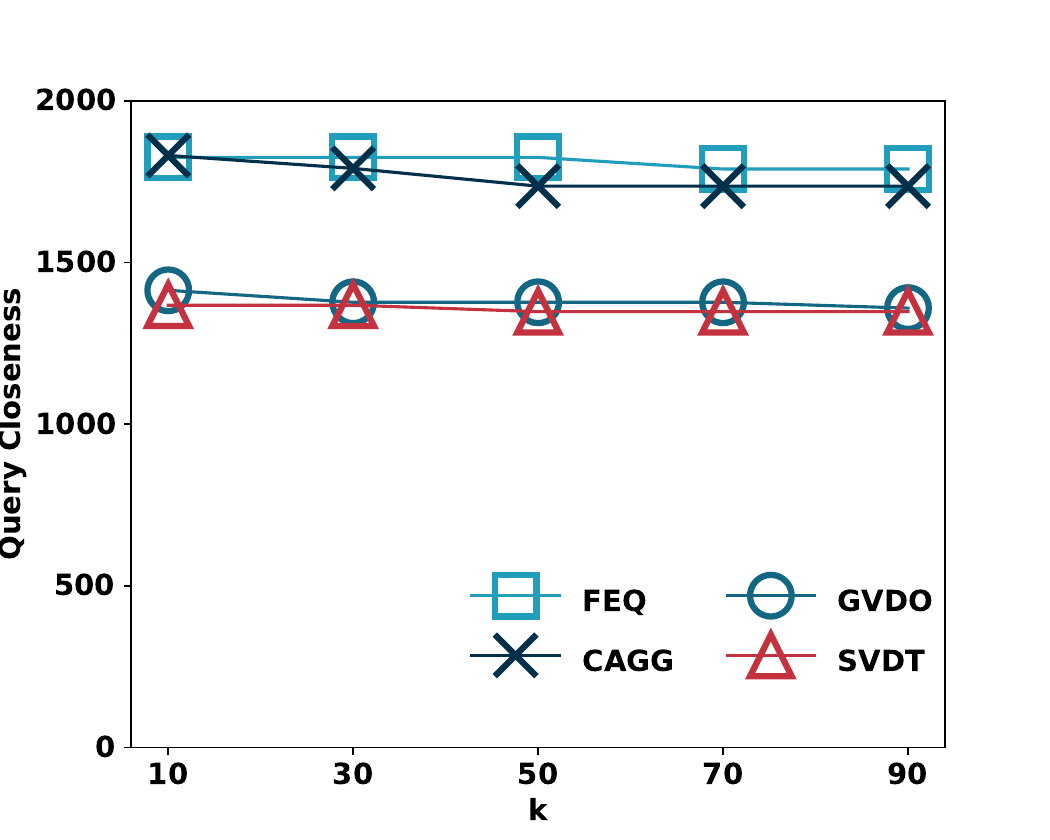}}
\hspace{-2.5em}
\vspace{-5pt}
\caption{Query closeness $C_Q(S)$ evaluation on 
all datasets.}
\label{Fig.main}
\end{figure}
\vspace{-23pt}

\begin{figure}[!h]
\vspace{-15pt}
\centering
\hspace{-1.5em}
\subfigure[Latt$\&$Lnur]{
\includegraphics[width=1.25in]{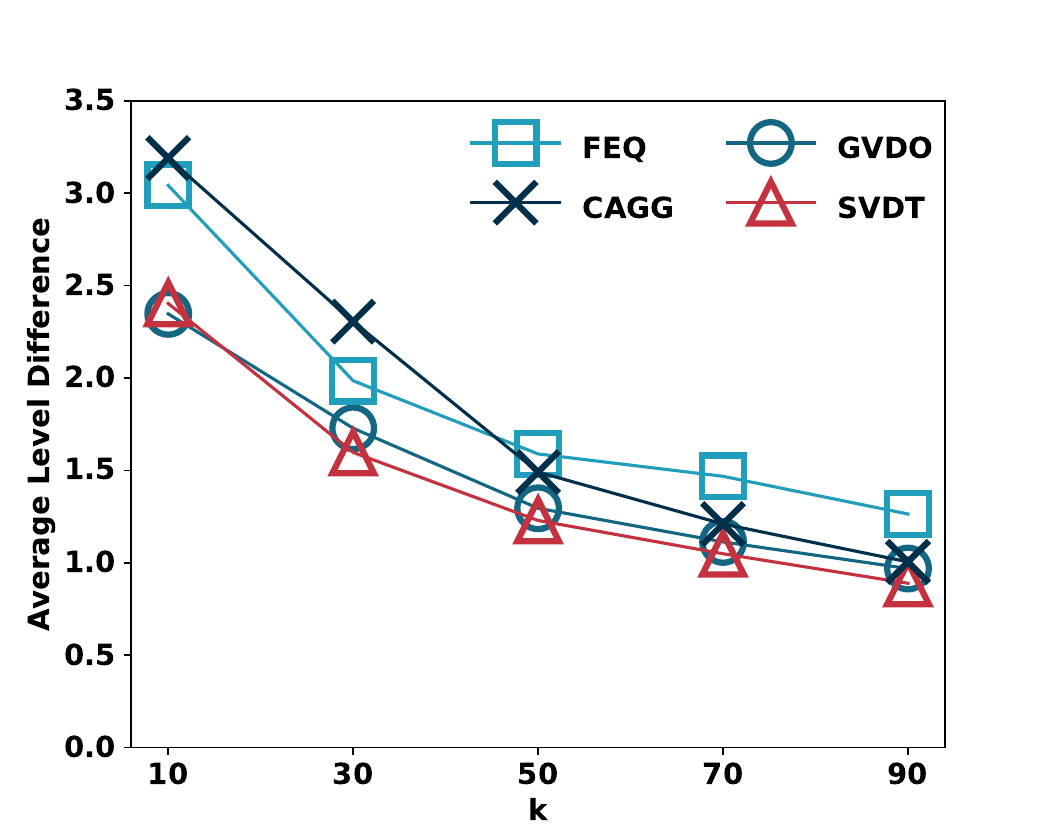} }
\hspace{-1.8em}
\subfigure[Anime2018$\&$2024]{
\includegraphics[width=1.25in]{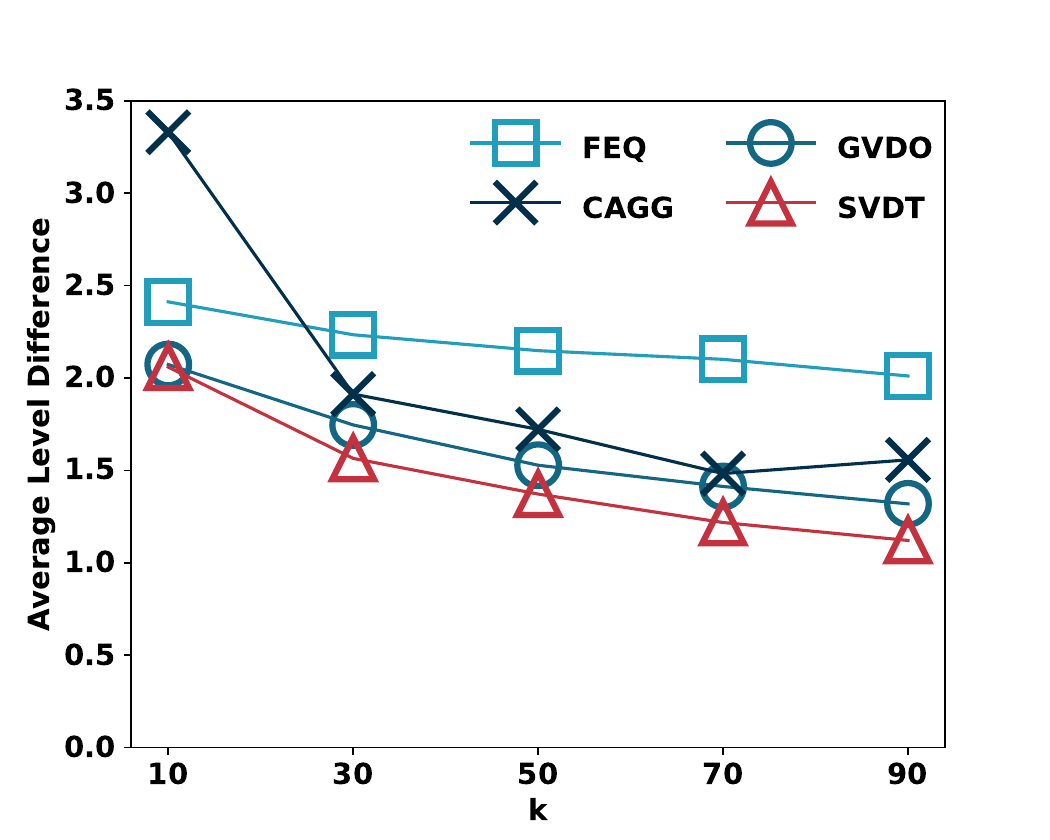}}
\hspace{-1.8em}
\subfigure[Yago1$\&$2]{
\includegraphics[width=1.25in]{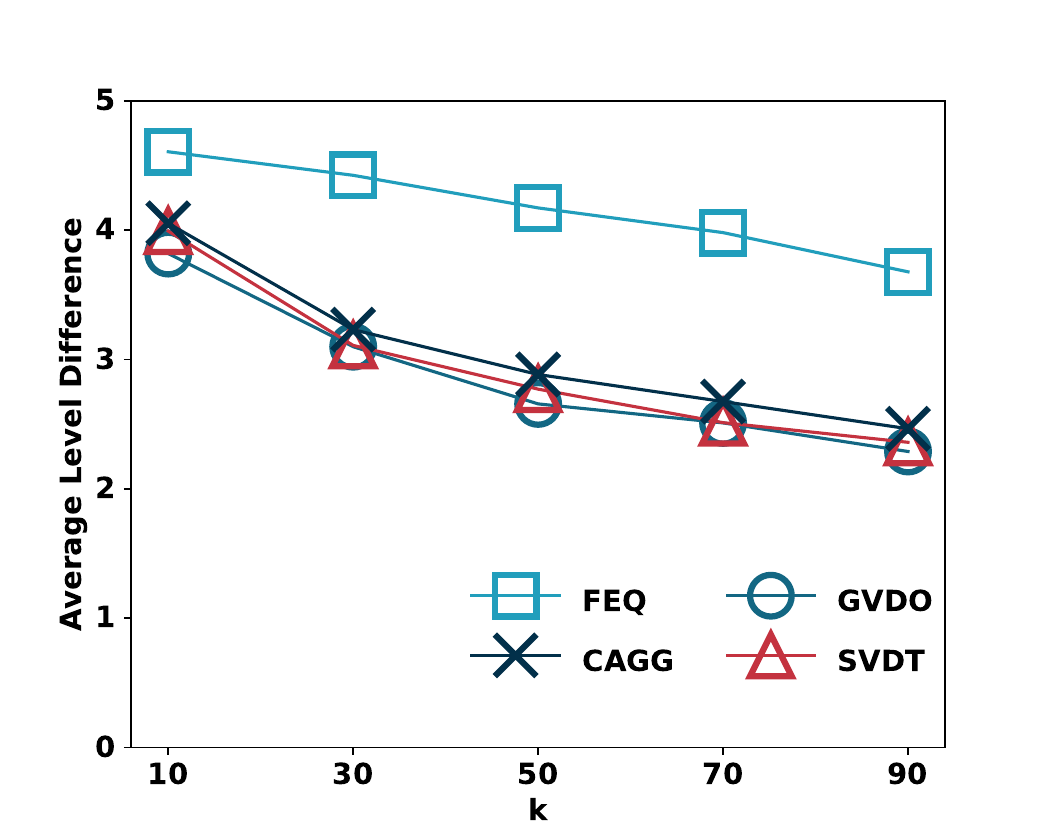}}
\hspace{-2.5em}
\vspace{-5pt}
\caption{Average level difference $Ald(S)$ evaluation.}
\label{Fig.main}
\vspace{-15pt}
\end{figure}

\begin{figure}[!h]
\centering

\includegraphics[width=1.6in]{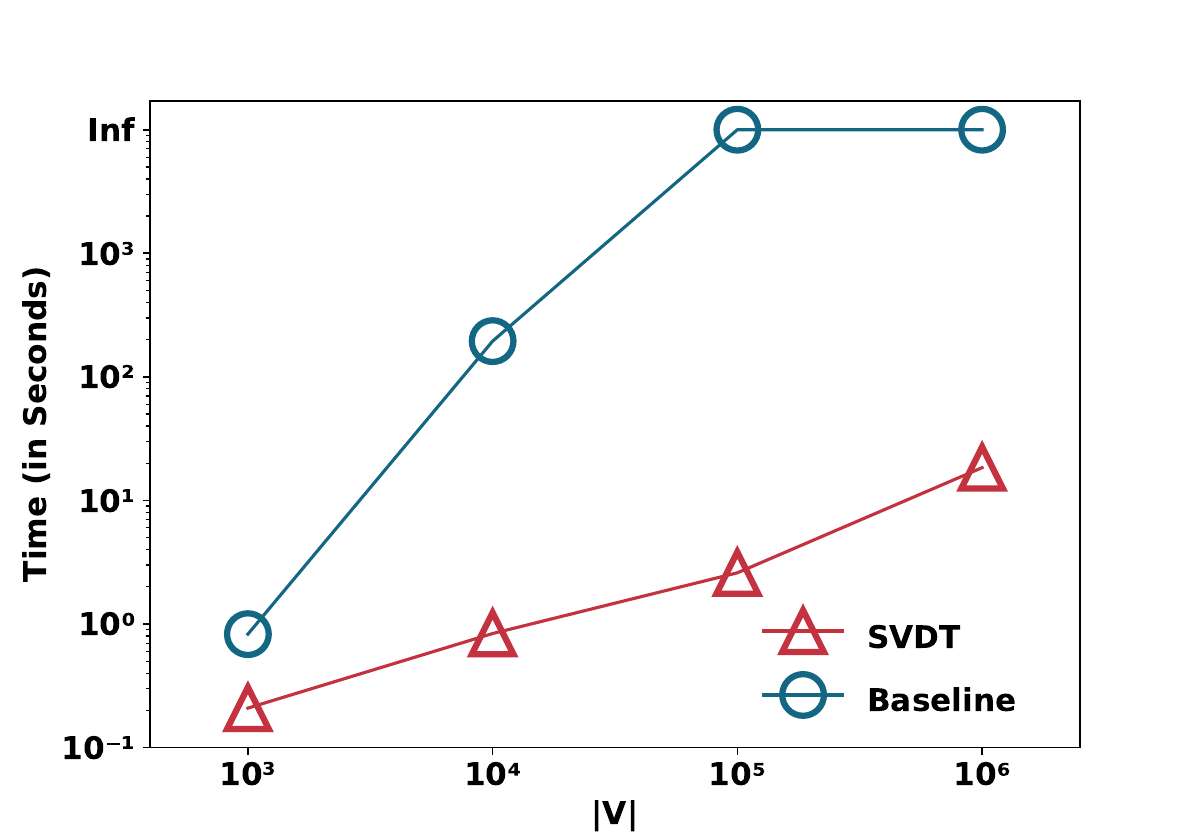}
\vspace{-5pt}
\caption{Scalability test on synthetic datasets.} \label{fig1}
\centering
\vspace{-15pt}
\end{figure}

\stitle{Exp-\MakeUppercase{\romannumeral 2}: Scalability Test}.  We generated four synthetic datasets with tree node sizes from $10^3$ to $10^6$. In all experiments, we fixed $k = 10$ and $\beta = 3$ to compare the scalability of SVDT, 
and the baseline in Fig.~8 is the optimized summary combination algorithm. 
As the tree node size increases, SVDT exhibits remarkable scalability and outperforms the baseline algorithm in efficiency, reflecting the effectiveness of Algorithm 2.

\stitle{Exp-\MakeUppercase{\romannumeral 3}: Case Study on ACM CCS Topic Summarization}. Fig.~9(a) illustrates the publication counts of research papers in different five-year periods on a poly-hierarchical ontology (ACM Computing Classification System \cite{acmccs}). 
The queries for different topics ($I_0$, $I_1$, ..., $I_{16}$) in blue, consist of specific query term names indicated in black and the corresponding number of research paper publications shown in red. For the case $k = 7$, in Fig.~9(b), we show the summarization given by the GVDO with the best experimental results among all baseline approaches in quality evaluation. It can be seen that GVDO summarizes some attractive topics 
without indicating priority or changes, thus providing limited information to users. The tree summarization by our method SVDT can be seen in Fig.~9(c). Our summarization has red and blue themes, representing similarities and differences, respectively. 
For the four second-level topics ($I_1$, $I_2$, $I_3$, $I_4$) in the original graph,  SVDT uses varying shades of red to indicate the popularity levels of these topics. Additionally, we show the diversity by highlighting the changes in some topics between the two five-year periods. Nodes ($I_5$, $I_{13}$, $I_{14}$) are colored with different shades of blue, and their direct ancestor computing methodology is marked with the deepest shade of red. These show that the popularity of artificial intelligence, learning paradigms, and machine learning methods has changed significantly in five years and is showing a trend of increasing popularity.



\begin{figure}[t]
\centering
\includegraphics[width=3.8in]{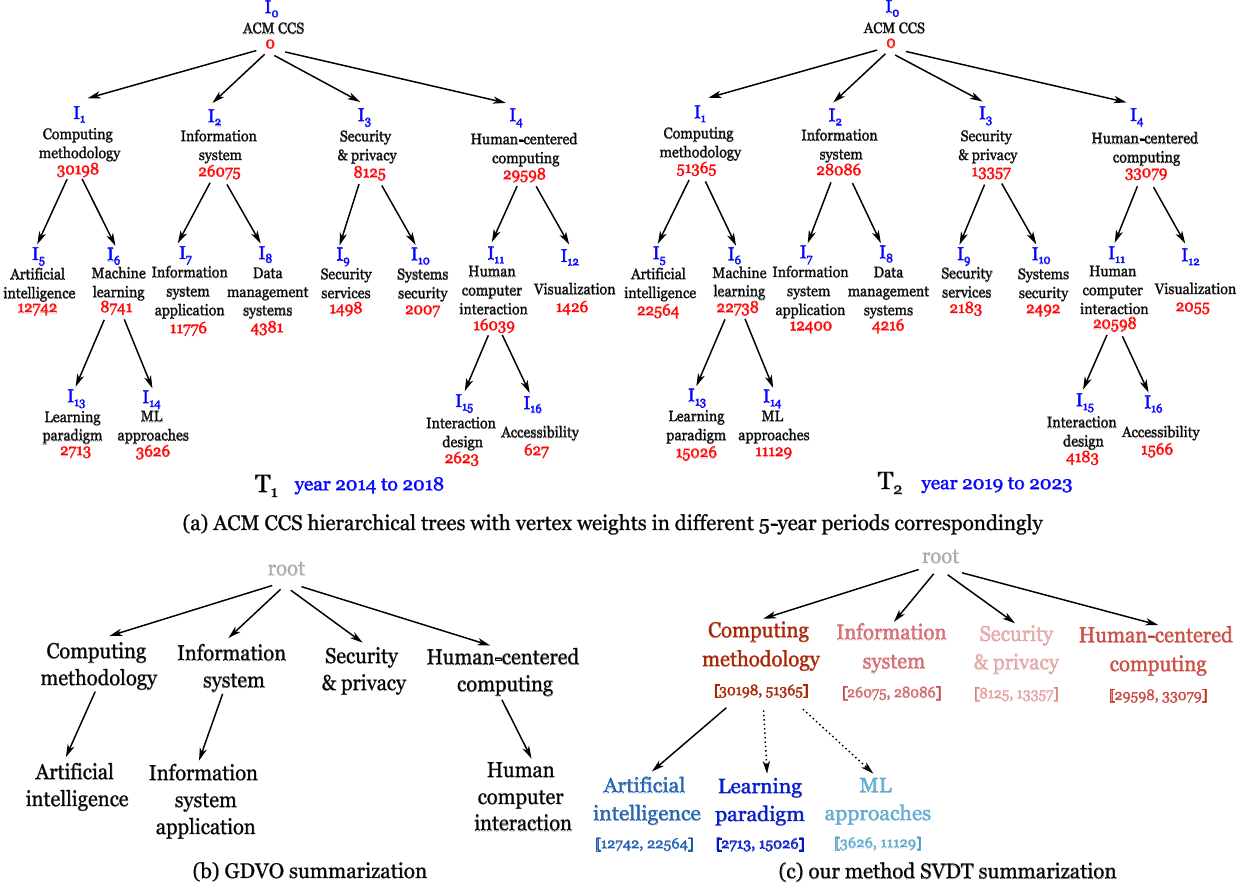}

\caption{Case study on ACM CCS hierarchical trees.} \label{fig1}
\vspace{-20pt}
\centering
\end{figure}

\vspace{-5pt}
\section{Conclusion} In this paper, we introduce the kVDT problem aimed at visually summarizing two weighted hierarchical trees of identical structure. The SVDT algorithm selects $k$ representative nodes, summarizing the commonalities and diversities between two weighted trees. We also propose a distribution normalization technique extending Hellinger distance for efficient non-leaf node computation, capturing key similarities and differences. Our method includes a summary visual representation for the compact visualization. Extensive experiments demonstrate the superiority of our proposed algorithm. 

\vspace{-5pt}





\end{document}